\newcommand{\trace}{\mathop{\mathrm{Tr}}}
\newtheorem{theorem}{Theorem}
\newtheorem{assumption}{Assumption}
\newtheorem{definition}{Definition}
\begin{document}

\title{Computing the Classical-Quantum channel capacity: experiments on a  Blahut-Arimoto type algorithm and an approximate solution for the binary inputs, two-dimensional outputs channel}

\author{Haobo Li$^1$ and Ning Cai$^2$}

\address{$^{1,2}$ 199 Huanke Road, Pudong New Area, Shanghai, China}
\ead{$^1$ lihb@shanghaitech.edu.cn}
\ead{$^2$ ningcai@shanghaitech.edu.cn}
\vspace{10pt}
\begin{indented}
\item \date{\today}
\end{indented}

\begin{abstract}
In our previous work \cite{myBA}, we presented a Blahut-Arimoto type algorithm for computing the discrete memoryless (DM) classical-quantum channel capacity. And the speed of convergence is analyzed. In this paper, we present numerical experiment to show that the algorithm converged much faster than the theoretical prediction. Also we present an explicit approximate solution for the binary inputs, two dimensional outputs classical-quantum channel, which has an error within $3\times10^{-4}$
\end{abstract}

\section{Introduction}
The classical-quantum channel \cite{Holevo} can be considered as consisting of an input alphabet $\mathcal{X}=\{1,2,\dots,|\mathcal{X}|\}$ and a mapping $x\rightarrow \rho_x$ from the input alphabet to a set of quantum states in a finite dimensional Hilbert space $\mathcal{H}$. The state of a quantum system is given by a density operator $\rho$, which is a positive semi-definite operator with trace equal to one. Let $\mathcal{D}^m$ denote the set of all density operators acting on a Hilbert space $\mathcal{H}$ of dimension $m$.  If the resource emits a letter $x$ with probability $p_x$, the output would be $\rho_x$, so the output would form an ensemble:  $\{p_x:\rho_x\}_{x\in \mathcal{X}}$.

In 1998, Holevo showed \cite{holevo1} that the classical capacity of the classical-quantum channel is the maximization of a quantity called the Holevo information over all input distributions. The Holevo information $\chi$ of an ensemble $\{p_x:\rho_x\}_{x\in \mathcal{X}}$ is defined as
\begin{eqnarray}\label{holevo}
	\chi(\{p_x:\rho_x\}_{x\in \mathcal{X}})=H(\sum_xp_x\rho_x)-\sum_xp_xH(\rho_x),
\end{eqnarray}
where $H(\cdot)$ is the von Neumann entropy which is defined on positive semidefinite matrices:
\begin{eqnarray}\label{von}
	H(\rho)=-\trace (\rho\log \rho).
\end{eqnarray}
Due to the concavity of von Neumann entropy \cite{wilde}, the Holevo information is always non-negative. The Holevo quantity is concave in the input distribution \cite{wilde}, and  so the maximization of equation (\ref{holevo}) over $p$ is a convex optimization problem. However it is not a straightforward convex optimization problem.

For discrete memoryless classical channels, the capacity can be computed efficiently by using an algorithm called Blahut-Arimoto (BA) algorithm \cite{Arimoto}\cite{blahut}\cite{yeung}. In 1998, H. Nagaoka \cite{original} proposed a quantum version of BA algorithm. In his work he considered the quantum-quantum channel and this problem was proved to be NP-complete \cite{npc}. And Nagaoka mentioned an algorithm concerning classical-quantum channel, however, its speed of convergence was not studied there. In 2014, Davide Sutter et al. \cite{Sutter} promoted an algorithm based on duality of convex programing and smoothing techniques \cite{nestrov} with a complexity of $O(\frac{(n\vee m)m^3(\log n)^{1/2}}{\varepsilon})$, where $n\vee m=\max\{n,m\}$.

The structure of this paper is organized as following: In Section \ref{baa}, we summarize the results of our previous work: the problem and the algorithm.
 In Section \ref{secnum} we show the numerical experiments of BA algorithm to see how well this algorithm performs. In Section \ref{secapp}, we propose an approximate solution for a special case, which is the binary inputs, two dimensional outputs channel.

$\bf{Notations}$: The logarithm $\log(\cdot)$ is with basis $2$ unless specially specified. The space of all Hermitian operators of dimension $m$ is denoted by $H^m$. The set of all density matrices of dimension $m$ is denoted by $\mathcal{D}^m=\{\rho\in H^m:\rho\geq 0,\trace \rho=1\}$. Each letter $x\in\mathcal{X}$ is mapped to a density matrix $\rho_x$ so the classical-quantum channel can be represented as a set of density matrices $\{\rho_x\}_{x\in\mathcal{X}}$. The set of all probability distributions of length $n$ is denoted by $\Delta_n=\{p:p_x\geq0,\sum_{x=1}^np_x=1\}$. The von Neumann entropy of a density matrix $\rho$ is denoted by $H(\rho)=-\trace[\rho\log \rho]$. The relative entropy between $p,q\in\Delta_n$, if ${supp}(p)\subset{supp}(q)$,  is denoted by $D(p||q)=\sum_x p_x(\log p_x-\log q_x)$ and $+\infty$ otherwise. The relative entropy between $\rho,\sigma\in \mathcal{D}^m$, if ${supp}(\rho)\subset {supp}(\sigma)$, is denoted by $D(\rho||\sigma)=\trace[\rho(\log \rho-\log\sigma)]$ and $+\infty$ otherwise.

\section{The BA algorithm and its convergence}\label{baa}
We want to compute the capacity of a given classical-quantum channel $\{x\rightarrow \rho_x\}_{x\in\mathcal{X}}$, which can be  written as 
\begin{eqnarray}\label{primal}
&\max\limits_{p}H(\sum_x  p_x\rho_x)-\sum_{x}  p_xH(\rho_x),\\
subject \  to \  \ &  \langle s|p\rangle\leq S;\label{cons}\\ 
& p\in \Delta_n, 
\end{eqnarray}
where $\rho_x\in \mathcal{D}^m$, $|s\rangle\geq 0,S>0$. We denote the maximal value of equation (\ref{primal}) as $C(S)$. In this optimization problem, we are to maximize the Holevo quantity with respect to the input distribution $\{p_x\}_{x\in\mathcal{X}}$.  Practically, the preparation of different signal state $x$ has different cost, which is represented by $|s\rangle$. And we would like to bound the expected cost of the resource within some quantity, which is represented by the inequality constraint in equation (\ref{cons})

In this paper we only present the algorithm which does not concern the inequality constraint (\ref{cons}). (The algorithms, with or without inequality constraint (\ref{cons}), are essentially the same.)
The algorithm we presented at \cite{myBA} can be summarized as following: (we take the natural logarithm for conveniece)

\begin{algorithm}[H]
\caption{Blahut-Arimoto algorithm for discrete memoryless classical-quantum channel}
\label{As1}
\begin{algorithmic}
\vspace{0.6em}

\STATE {set $p^0_x=\frac{1}{|\mathcal{X}|}$, $x\in \mathcal{X}$;} 
\vspace{0.6em}

\REPEAT 
\vspace{0.6em}

\STATE $p^{t+1}_x=\frac{r_x^t}{\sum_xr_x^t}$, where 
\quad$r_x^t=\exp{(\trace{\{\rho_x[\log{(p^t_x\rho_x)-\log{\rho^t}}]\}}}$, $\rho^t=\sum_xp_x^t\rho_x$;
\vspace{0.6em}

\UNTIL{convergence.} 
\end{algorithmic}
\end{algorithm}

And we have the following theorems: (For detailed derivation and proof, please refer to \cite{myBA}.)
\begin{definition}
	Define the value of  Holevo quantity (\ref{holevo}) of  a certain channel after $t $ iterations as 
	\begin{eqnarray}
		\chi(t) = H(\sum_x  p^t_x\rho_x)-\sum_{x}  p^t_xH(\rho^t_x).
	\end{eqnarray}
\end{definition}
\begin{theorem}\label{th1}
	$\chi(t)$ converges to the capacity  of the classical-quantum channel from below as $t\rightarrow\infty$; and the input distribution $p^t$ converges to an optimal distribution $p^*$ as $t\rightarrow \infty$.
\end{theorem}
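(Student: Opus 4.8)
The plan is to recast the iteration as relative-entropy ascent on the simplex and to isolate the single genuinely quantum ingredient, the data-processing inequality, which is what forces the scheme to be monotone; everything else parallels the classical Blahut--Arimoto analysis. \emph{Reformulation.} First I would rewrite the objective as $\chi(p)=\sum_x p_x D(\rho_x\|\rho_p)$ with $\rho_p=\sum_x p_x\rho_x$, and record the elementary identity $\sum_x p_x D(\rho_x\|\sigma)=\chi(p)+D(\rho_p\|\sigma)$, valid for every $\sigma\in\mathcal{D}^m$. Differentiating $\chi$ on the simplex gives $\partial\chi/\partial p_x=D(\rho_x\|\rho_p)-1$, so by concavity the KKT conditions characterize an optimizer $p^*$, with capacity $C=\chi(p^*)$, by $D(\rho_x\|\rho^*)\le C$ for all $x$, with equality on $\mathrm{supp}(p^*)$; in particular $\chi(t)\le C$ for every $t$.

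\emph{The telescoping relation.} Writing $d_x^t=D(\rho_x\|\rho^t)$ and $c^t=\log\sum_x p_x^t e^{d_x^t}$, the update reads $p_x^{t+1}=p_x^t e^{d_x^t-c^t}$, so for \emph{any} $p\in\Delta_n$ I would compute, using the identity above, the three-point relation $D(p\|p^t)-D(p\|p^{t+1})=\sum_x p_x(d_x^t-c^t)=\chi(p)+D(\rho_p\|\rho^t)-c^t$. This single relation drives both halves of the theorem.

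\emph{Monotone increase.} Specializing the relation to $p=p^t$ gives $c^t=\chi(t)+D(p^t\|p^{t+1})\ge\chi(t)$ (also immediate from Jensen's inequality), while using the identity once more with $p=p^{t+1}$ and $\sigma=\rho^t$ yields $\chi(t+1)=c^t+D(p^{t+1}\|p^t)-D(\rho^{t+1}\|\rho^t)$. The main obstacle is the subtracted term $D(\rho^{t+1}\|\rho^t)$: there is no term-by-term reason for $\chi(t+1)\ge c^t$. I would resolve it by applying the data-processing inequality \cite{wilde} to the measure-and-prepare channel $\mathcal{N}:\hat p\mapsto\sum_x\langle x|\hat p|x\rangle\rho_x$, which sends the diagonal matrix of a distribution to its output state and hence gives $D(\rho^{t+1}\|\rho^t)\le D(p^{t+1}\|p^t)$. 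This yields $\chi(t+1)\ge c^t\ge\chi(t)$, and it is precisely the step where the argument departs from the classical proof.

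\emph{Convergence of the value and of the iterates.} Since $c^t\le\chi(t+1)\le C$, the telescoping relation with an optimizer $p^*$ gives $D(p^*\|p^t)-D(p^*\|p^{t+1})=(C-c^t)+D(\rho_{p^*}\|\rho^t)\ge C-\chi(t+1)\ge 0$; all iterates have full support because $p^0$ is uniform, so these relative entropies are finite. Hence $D(p^*\|p^t)$ is nonincreasing, and summing over $t$ bounds $\sum_{t\ge1}(C-\chi(t))\le D(p^*\|p^0)\le\log|\mathcal{X}|<\infty$, which together with the monotonicity of the previous step forces $\chi(t)\uparrow C$ from below. For the input distribution, any limit point $\bar p$ of the simplex-compact sequence $\{p^t\}$ satisfies $\chi(\bar p)=C$ by continuity, so $\bar p$ is itself an optimizer and $D(\bar p\|p^t)$ is nonincreasing; since it tends to $0$ along the subsequence converging to $\bar p$, it tends to $0$ along the whole sequence, and Pinsker's inequality then gives $p^t\to\bar p=:p^*$. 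I expect step three, the use of data processing to defeat the $-D(\rho^{t+1}\|\rho^t)$ term, to be the crux; the remaining estimates are routine once the three-point relation is in hand.
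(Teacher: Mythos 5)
Your proof is correct and follows essentially the same route as the paper's own argument (given in the cited previous work \cite{myBA}): your quantity $c^t$ is exactly the value $f(p^{t+1},p^t)$ of the two-variable function used there for alternating maximization, your monotonicity step rests on the same data-processing inequality $D(\rho^{t+1}\|\rho^t)\le D(p^{t+1}\|p^t)$ for the measure-and-prepare map, and your telescoping bound $\sum_{t\ge 1}(C-\chi(t))\le D(p^*\|p^0)\le\log n$ is the same estimate that also underlies the paper's Theorem \ref{th2}. The only difference is organizational: you fold the alternating-maximization structure into the three-point identity rather than defining $f(p,q)$ explicitly.
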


\begin{theorem}\label{th2}
	To reach $\epsilon$ accuracy to the capacity, the BA algorithm needs an iteration complexity less than $\frac{\log n}{\epsilon}$.
\end{theorem}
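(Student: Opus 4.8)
The plan is to recast Algorithm \ref{As1} as an alternating (mirror-ascent) maximization and then run the standard Blahut--Arimoto convergence analysis, adapted to the quantum relative entropy. First I would record the Gibbs variational identity
\[
\chi(p)=\sum_x p_x D(\rho_x\|\rho)=\max_{\sigma\in\mathcal{D}^m}\sum_x p_x D(\rho_x\|\sigma),\qquad \rho=\sum_x p_x\rho_x,
\]
with the maximum attained uniquely at $\sigma=\rho$. This follows from linearity of the trace, which reduces $\sum_x p_x D(\rho_x\|\sigma)$ to $-\sum_x p_xH(\rho_x)-\trace[\rho\log\sigma]$, and from Klein's inequality $\trace[\rho(\log\rho-\log\sigma)]\ge 0$. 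Writing $J(p,\sigma)=\sum_x p_x D(\rho_x\|\sigma)$, the capacity becomes the double maximization $C=\max_p\max_\sigma J(p,\sigma)$, and the update $\rho^t=\sum_x p_x^t\rho_x$ is exactly the $\sigma$-step $\sigma^t=\arg\max_\sigma J(p^t,\sigma)$.

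Second, I would identify the $p$-step as an entropic/proximal step. Since $r_x^t=p_x^t\exp D(\rho_x\|\rho^t)$, the new iterate maximizes $\sum_x p_x D(\rho_x\|\rho^t)-D(p\|p^t)$ over $\Delta_n$, and a short computation using $\log(p_x^{t+1}/p_x^t)=D(\rho_x\|\rho^t)-\log Z_t$ yields the exact three-point identity, valid for \emph{every} $p\in\Delta_n$,
\[
\sum_x p_x D(\rho_x\|\rho^t)-D(p\|p^t)=\log Z_t-D(p\|p^{t+1}),\qquad Z_t=\sum_x p_x^t\exp D(\rho_x\|\rho^t).
\]

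Third, which is the heart of the argument, I would specialize to an optimal input $p^*$ with $\rho^*=\sum_x p^*_x\rho_x$ and turn the identity into a one-sided descent inequality. On one hand, linearity of the trace collapses the cross term to the KKT-type equality $\sum_x p^*_x D(\rho_x\|\rho^t)=C+D(\rho^*\|\rho^t)\ge C$ (the operators $\log\rho^*,\log\rho^t$ are pulled out of the $x$-sum, so no commutator correction appears). On the other hand, evaluating the identity at $p=p^{t+1}$ gives $\log Z_t=J(p^{t+1},\rho^t)-D(p^{t+1}\|p^t)\le J(p^{t+1},\rho^t)\le\chi(t+1)$, where the last step is the variational characterization applied to $p^{t+1}$. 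Substituting $p=p^*$ and combining these two facts produces $D(p^*\|p^t)-D(p^*\|p^{t+1})\ge C-\chi(t+1)$.

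Finally, telescoping over $t=0,\dots,T-1$ bounds $\sum_{s=1}^{T}\bigl(C-\chi(s)\bigr)$ by $D(p^*\|p^0)-D(p^*\|p^T)\le D(p^*\|p^0)$; the uniform initialization gives $D(p^*\|p^0)=\log n-H(p^*)\le\log n$. Invoking the monotonicity of $\chi(t)$ from Theorem \ref{th1}, the final gap is the smallest summand, so $T\bigl(C-\chi(T)\bigr)\le\log n$, whence $C-\chi(T)\le(\log n)/T$; demanding accuracy $\epsilon$ gives $T\le(\log n)/\epsilon$. The step I expect to be the main obstacle is the third: one must bound $\log Z_t$ from \emph{above} by $\chi(t+1)$ (the direct Jensen bound $\log Z_t\ge\chi(t)$ runs the wrong way), which forces the comparison against the \emph{next} iterate and relies on the $\sigma$-step being an exact maximizer; getting the directions of these two bounds right, and checking the support conditions that keep every relative entropy finite (which the uniform start secures, since each $\rho^t$ then dominates every $\rho_x$), is the delicate part, whereas the telescoping and the $\log n$ estimate are routine.
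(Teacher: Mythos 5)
Your overall skeleton --- the entropic three-point identity for the exponential update, evaluation at an optimal $p^*$, telescoping, and $D(p^*\|p^0)\le\log n$ for the uniform start --- is the right one, and is essentially the standard Blahut--Arimoto analysis that the paper defers to \cite{myBA}. But the step you yourself flag as the heart of the argument contains a genuine error, not merely a delicate point. Your ``Gibbs variational identity'' is stated backwards: by your own computation, $\sum_x p_x D(\rho_x\|\sigma)=\chi(p)+D(\rho\|\sigma)$ with $\rho=\sum_x p_x\rho_x$, so Klein's inequality makes $\sigma=\rho$ the unique \emph{minimizer}, i.e.\ $\chi(p)=\min_{\sigma}\sum_x p_x D(\rho_x\|\sigma)$, while the supremum over $\sigma$ is $+\infty$ (take $\sigma$ nearly singular on the support of $\rho$). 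Consequently the inequality you need in your third step, $J(p^{t+1},\rho^t)\le\chi(t+1)$, is exactly the one that fails as justified: in fact $J(p^{t+1},\rho^t)=\chi(t+1)+D(\rho^{t+1}\|\rho^t)\ge\chi(t+1)$, and the reasoning that ``the $\sigma$-step is an exact maximizer'' is void because the $\sigma$-step is a minimizer. The same issue invalidates the framing $C=\max_p\max_\sigma J(p,\sigma)$.

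The conclusion $\log Z_t\le\chi(t+1)$ is nevertheless true, but it needs an ingredient your proposal never invokes: the data-processing inequality (monotonicity of quantum relative entropy). From your own identity, $\log Z_t=J(p^{t+1},\rho^t)-D(p^{t+1}\|p^t)=\chi(t+1)+D(\rho^{t+1}\|\rho^t)-D(p^{t+1}\|p^t)$, so the required bound is equivalent to $D(\rho^{t+1}\|\rho^t)\le D(p^{t+1}\|p^t)$, i.e.\ monotonicity of relative entropy under the CPTP classical-to-quantum map $p\mapsto\sum_x p_x\rho_x$. In the classical BA proof this step is invisible because the double maximization runs over \emph{backward channels} $Q(x|y)$ rather than over output distributions, and data processing is built into that parametrization; once you replace the backward channel by an output state $\sigma$, as you do, the max--max structure becomes a max--min and data processing must be cited explicitly. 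With that single repair (and noting that the monotonicity $\chi(t+1)\ge\chi(t)$, which you import from Theorem \ref{th1} but which that theorem does not assert, follows from the two bounds $\chi(t)\le\log Z_t\le\chi(t+1)$, the first being your Jensen bound), the telescoping argument is correct and yields $C-\chi(T)\le(\log n)/T$, hence the claimed $(\log n)/\epsilon$ iteration complexity.
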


If the classical-quantum channel has some extra feature, we have a better convergence performance:
\begin{assumption}\label{as1}
		The channel matrices $\{\rho_x\}_{x\in \mathcal{X}}$ are linearly independent in the complex matrix space, i.e. there doesn't exist a  vector $|c\rangle$ such that
	\begin{eqnarray}
		\sum_x c_x\rho_x=0.
	\end{eqnarray}
\end{assumption}

\begin{theorem}
	Under Assumption \ref{as1}, the optimal solution $p^*$ is unique. And   $p^t$ converges to $p^*$ at a geometric speed, i.e. there exist $N_0$ and $\delta>0$, where $N$ and $\delta$ are independent, such that for any $t>N_0$, we have
	\begin{eqnarray}
		 D(p^*||p^{t})\leq (1-\delta)^{t-N_0}D(p^*||p^{N_0}).
	\end{eqnarray}

\end{theorem}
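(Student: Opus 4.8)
First I would settle uniqueness. Assumption \ref{as1} makes the affine map $p\mapsto\rho_p=\sum_x p_x\rho_x$ injective on $\Delta_n$: if $\rho_p=\rho_q$ then $\sum_x(p_x-q_x)\rho_x=0$, forcing $p=q$. Since the von Neumann entropy $H$ is strictly concave on $\mathcal{D}^m$ while $\sum_x p_xH(\rho_x)$ is linear in $p$, composing $H$ with this injective affine map makes $\chi(p)=H(\rho_p)-\sum_x p_xH(\rho_x)$ strictly concave on $\Delta_n$, so it admits a unique maximizer $p^*$.

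The plan for the rate is to use $D(p^*||p^t)$ as a Lyapunov function. Writing the update as $p_x^{t+1}=p_x^t e^{d_x^t}/Z^t$ with $d_x^t=D(\rho_x||\rho^t)$ and $Z^t=\sum_x p_x^t e^{d_x^t}$, a one-line computation gives the identity
$$D(p^*||p^t)-D(p^*||p^{t+1})=\sum_x p_x^* d_x^t-\log Z^t.$$
The same estimates that prove Theorem \ref{th2} turn this into the descent inequality $D(p^*||p^t)-D(p^*||p^{t+1})\ge\chi^*-\chi(p^t)$, where $\chi^*=\chi(p^*)$; summing it yields the $O(\log n/\epsilon)$ bound. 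To obtain a geometric rate I must instead lower-bound the per-step decrease by a fixed fraction of $D(p^*||p^t)$ itself.

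The heart of the proof is therefore a gradient-domination (Polyak--Lojasiewicz type) estimate: there exist $N_0$ and $\delta>0$ with
$$\chi^*-\chi(p^t)\ge\delta\,D(p^*||p^t),\qquad t>N_0.$$
Combined with the descent inequality this gives $D(p^*||p^{t+1})\le(1-\delta)D(p^*||p^t)$, which telescopes to the claim. To prove the estimate I would first show the support stabilizes: the fixed-point conditions give $D(\rho_x||\rho^*)=\chi^*$ on $\mathrm{supp}(p^*)$ and $D(\rho_x||\rho^*)<\chi^*$ off it, so once $\rho^t$ is near $\rho^*$ the letters outside $\mathrm{supp}(p^*)$ have strictly smaller $d_x^t$ and the update suppresses their mass geometrically. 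After finitely many steps $N_0$ the iterate sits in a neighborhood of $p^*$ inside the face spanned by $\mathrm{supp}(p^*)$, where the first-order term of $\chi$ cancels along the simplex and strict concavity upgrades to a local bound $\chi^*-\chi(p)\ge\frac{\mu}{2}\|p-p^*\|^2$; pairing this with the second-order expansion $D(p^*||p)\le C\|p-p^*\|^2$ (valid because $p_x^*>0$ on the support) produces $\delta$.

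I expect the main obstacle to be this last step, in two respects. First, one must convert Assumption \ref{as1} into a \emph{quantitative} lower bound $\mu>0$ on the smallest eigenvalue of the Hessian of $H(\rho_p)$ on the tangent space of the simplex; this requires differentiating the matrix logarithm and checking that the resulting quantum-Fisher-information quadratic form is positive definite exactly because $\{\rho_x\}$ are linearly independent. Second, one must control the transient phase so that $\delta$ can be chosen independently of the finite time $N_0$ at which the support freezes and the iterate enters the strongly concave neighborhood. The possibility that $p^*$ lies on the boundary of $\Delta_n$ is precisely what forces the support-stabilization bookkeeping; for an interior optimum the argument would be considerably shorter.
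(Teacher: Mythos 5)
First, a caveat: this paper does not actually contain a proof of this theorem --- it defers to \cite{myBA} --- so your attempt can only be assessed on its own terms. Your uniqueness argument is correct and standard: Assumption \ref{as1} makes $p\mapsto\rho_p$ injective, strict concavity of the von Neumann entropy then makes $\chi$ strictly concave on $\Delta_n$, and uniqueness follows. Your descent identity is also correct, but the inequality the Theorem \ref{th2} estimates actually yield is $D(p^*||p^t)-D(p^*||p^{t+1})\ge\chi^*-\chi(p^{t+1})$ (using $\sum_x p^*_x d_x^t=\chi^*+D(\rho^*||\rho^t)$ and $\log Z^t\le\chi(p^{t+1})$, the latter by data processing); your version with $\chi(p^t)$ does not follow from them, since Jensen gives $\log Z^t\ge\chi(p^t)$, the opposite of what you would need. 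This slip is harmless: a PL bound applied at $p^{t+1}$ would still give $D(p^*||p^{t+1})\le(1+\delta)^{-1}D(p^*||p^t)$, which is geometric.

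The genuine gap is your claim that the fixed-point conditions give $D(\rho_x||\rho^*)<\chi^*$ for $x\notin\mathrm{supp}(p^*)$. The optimality (KKT) conditions only give $D(\rho_x||\rho^*)\le\chi^*$ off the support, and Assumption \ref{as1} does not upgrade this to a strict inequality. Concretely: take $\rho_1,\rho_2\in\mathcal{D}^2$ with unique optimizer $(p_1^*,p_2^*)$, both entries positive, and optimal output $\rho^*$, so that $D(\rho_1||\rho^*)=D(\rho_2||\rho^*)=\chi^*$. The level set $\{\rho\in\mathcal{D}^2: D(\rho||\rho^*)=\chi^*\}$ is a two-dimensional surface in the Bloch ball which, by strict convexity of $D(\cdot||\rho^*)$, meets the affine line through $\rho_1,\rho_2$ only at those two points; pick any other state $\rho_3$ on this surface. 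Then $\{\rho_1,\rho_2,\rho_3\}$ is linearly independent, so Assumption \ref{as1} holds, and $p^*=(p_1^*,p_2^*,0)$ is still the unique maximizer, yet $D(\rho_3||\rho^*)=\chi^*$ with $p_3^*=0$. For this channel your gradient-domination estimate is simply false: along $p_\epsilon=(1-\epsilon)p^*+\epsilon e_3$ the first-order term of $\chi$ vanishes exactly because $D(\rho_3||\rho^*)=\chi^*$, so $\chi^*-\chi(p_\epsilon)=O(\epsilon^2)$, while $D(p^*||p_\epsilon)=-\log(1-\epsilon)\approx\epsilon$; no $\delta>0$ works, however large $N_0$ is taken.

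For the same reason your support-stabilization step fails: $p_3^{t+1}/p_3^t=\exp[d_3^t-\log Z^t]$, and both $d_3^t$ and $\log Z^t$ converge to $\chi^*$ with no first-order gap, so the off-support mass is suppressed only through second-order terms (one expects $1/t$-type, not geometric, decay of $p_3^t$, and hence of $D(p^*||p^t)$, which is bounded below by roughly $p_3^t$). So the difficulty you flagged at the end is not merely technical bookkeeping: under Assumption \ref{as1} alone the two central estimates of your plan (strict divergence gap off the support, and the PL inequality) break down, and any complete proof of the stated theorem --- including the one in \cite{myBA} --- must either confront exactly this degenerate configuration or add a non-degeneracy hypothesis (e.g.\ that no letter with $p_x^*=0$ satisfies $D(\rho_x||\rho^*)=\chi^*$, which holds automatically when $p^*$ has full support).
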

\section{Numerical experiments on BA algorithm}\label{secnum}
 We will study the relations between iteration complexity and $n,m$ (i.e. the input size and output dimension) when 
the algorithm reaches certain accuracy. Due to we don't know the true capacity of a certain channel, we will use the following theorem to bound the error of the algorithm.
\begin{theorem}
	With the iteration procedure in the BA algorithm \ref{As1}, $\max_{x}\{D(\rho_x||\rho^t)\}$ converges to $F(\lambda)$ from above.
\end{theorem}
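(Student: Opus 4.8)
The plan is to pair the iterate convergence of Theorem \ref{th1} with a step-independent upper bound on the capacity, so that $\max_x\{D(\rho_x||\rho^t)\}$ is squeezed onto the optimal value from above; together with the ``from below'' convergence of $\chi(t)$ this sandwiches the true capacity. I would first record the identity $\chi(p)=\sum_x p_x D(\rho_x||\rho_p)$ with $\rho_p:=\sum_x p_x\rho_x$, obtained by expanding $D(\rho_x||\rho_p)=\trace[\rho_x\log\rho_x]-\trace[\rho_x\log\rho_p]$ and averaging over $p_x$. The engine of the proof is then the quantum Donald identity: for any state $\sigma$ whose support contains that of each $\rho_x$,
\[
\sum_x p_x D(\rho_x||\sigma)=\chi(p)+D(\rho_p||\sigma),
\]
which is a one-line computation after using $\rho_p=\sum_x p_x\rho_x$ to rewrite $\trace[\rho_p\log\sigma]$.

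The upper bound then drops out immediately. Because $D(\rho_p||\sigma)\ge 0$ and a linear functional on the simplex is maximized at a vertex,
\[
C=\max_{p\in\Delta_n}\chi(p)\le\max_{p\in\Delta_n}\sum_x p_x D(\rho_x||\sigma)=\max_x D(\rho_x||\sigma)
\]
for every admissible $\sigma$; taking $\sigma=\rho^t$ yields $\max_x\{D(\rho_x||\rho^t)\}\ge C$ at every iteration, which is the ``from above'' half of the statement. Here I identify the target $F(\lambda)$ with the optimal value $C$, as characterized in \cite{myBA}.

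For the convergence itself I would invoke Theorem \ref{th1} to get $p^t\to p^*$ for some optimizer $p^*$, hence $\rho^t\to\rho^*:=\rho_{p^*}$, the optimal output state, which is unique even when $p^*$ is not (strict concavity of $H$ forces any two optimal averages to coincide). The KKT conditions of $\max_{p\in\Delta_n}\chi(p)$, using $\partial\chi/\partial p_x=D(\rho_x||\rho_p)-1$, give $D(\rho_x||\rho^*)=C$ on $\mathrm{supp}(p^*)$ and $D(\rho_x||\rho^*)\le C$ off it, so $\max_x D(\rho_x||\rho^*)=C$. It then suffices to pass $\max_x\{D(\rho_x||\rho^t)\}\to\max_x D(\rho_x||\rho^*)$ through the continuity of $\sigma\mapsto-\trace[\rho_x\log\sigma]$.

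The one delicate point, and the part I expect to be the main obstacle, is this last limit, where the support structure must be controlled: relative entropy is only lower semicontinuous in its second slot, and $D(\rho_x||\rho^*)$ could a priori be $+\infty$ if some $\rho_x$ leaked support outside $\mathrm{supp}(\rho^*)$. I would rule this out by an optimality argument — if $\mathrm{supp}(\rho_x)\not\subseteq\mathrm{supp}(\rho^*)$ for some $x$, the directional derivative of $H(\rho_p)$ at $p^*$ toward the vertex $e_x$ is $+\infty$, contradicting maximality of $\chi$ at $p^*$ — so every $D(\rho_x||\rho^*)$ is finite; and since the multiplicative update keeps $\rho^t$ positive with eigenvalues on $\mathrm{supp}(\rho^*)$ bounded below near the limit, the convergence $\trace[\rho_x\log\rho^t]\to\trace[\rho_x\log\rho^*]$ is genuine rather than merely one-sided.
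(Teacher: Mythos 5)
Your proof is correct, and its ``from above'' half is essentially the paper's: both rest on the decomposition $\sum_x p^*_x D(\rho_x||\rho^t) = \chi(p^*) + D(\rho^*||\rho^t)$ (the Donald identity), which forces $\max_x D(\rho_x||\rho^t)\geq C$ at every iteration. Where you genuinely diverge is the convergence half. The paper extracts the optimality conditions $D(\rho_x||\rho^*)\leq C$, with equality when $p^*_x>0$, from the algorithm itself: it computes $\lim_{t\rightarrow\infty} p^{t+1}_x/p^t_x = \exp[D(\rho_x||\rho^*)-C]$ and observes that this ratio must equal $1$ on the support of $p^*$ and cannot exceed $1$ off it. You obtain the same conditions instead as the KKT conditions of the concave program, via $\partial\chi/\partial p_x = D(\rho_x||\rho_p)-1$. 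The two routes deliver identical facts; the paper's stays closer to the structure of Algorithm \ref{As1} and Theorem \ref{th1}, while yours is independent of the particular iteration and would apply to any sequence with $\rho^t\rightarrow\rho^*$. A genuine advantage of your write-up is the last step: passing from $D(\rho_x||\rho^t)$ to $D(\rho_x||\rho^*)$ requires controlling supports, since relative entropy is only lower semicontinuous in its second argument, and you close this by showing support leakage of some $\rho_x$ outside $\mathrm{supp}(\rho^*)$ would give an infinite directional derivative of the entropy at $p^*$, contradicting optimality (which also guarantees $\mathrm{supp}(\rho^t)\subseteq\mathrm{supp}(\rho^*)$, so the limit is a true limit). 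The paper's proof silently assumes this continuity when it concludes $\max_x\{D(\rho_x||\rho^t)\}\rightarrow C$ from the fixed-point conditions, so your version is the more rigorous of the two on this point.
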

\begin{proof}
	Followed from Algorithm \ref{As1}, and Theorem \ref{th1}, with some calculation, we have
	\begin{eqnarray}
		\lim_{t\rightarrow\infty}\frac{p^{t+1}_x}{p^t_x}=\exp[D(\rho_x||\rho^*)-C],
	\end{eqnarray} 
	where $\rho^*=\sum_xp^*_x\rho_x$, $p^*$ is the optimal distribution that $p^t$ converges to, and $C$ is the capacity of the channel. The limit above is $1$ if $p^*_x>0$ and does not exceed $1$ if $p^*_x=0$. So
	\begin{eqnarray}
		D(\rho_x||\rho^*)\leq C
	\end{eqnarray}
	for every $x\in\mathcal{X}$, with equality if $p^*_x>0$. This proves
	\begin{eqnarray}
		\max_{x}\{D(\rho_x||\rho^t)\}\rightarrow C.
	\end{eqnarray}
	And  for any $p^t$ and any optimal distribution $p^*$, we have 
	\begin{eqnarray}
		\max_{x}&[D(\rho_x||\rho^t)]\geq \sum_x p^*_x[D(\rho_x||\rho^t)]\\
		&=\sum_x p^*_xD(\rho_x||\rho^*)+D(\rho^*||\rho^t)\\&=C+D(\rho^*||\rho^t)\geq C.
	\end{eqnarray}
	The  first equality requires some calculation and the second equality follows since $p^*$ is an optimal distribution. And the last inequality follows from the non-negativity of mutual information. This means $\max_{x}\{D(\rho_x||\rho^t)\}$ converges to $C$ from above.
\end{proof}

So our accuracy criterion is: for a given classical-quantum channel, we run the BA algorithm (with no input constraint), until $[\max_{x}\{D(\rho_x||\rho^t)\}-\chi(t)]$ is less than $10^{-k}$, and record the number of iteration. At this time, the accuracy is of order $10^{-(k+1)}$ since $\max_{x}\{D(\rho_x||\rho^t)\}$ and $\chi(t)$ converges to the true capacity from above and below respectively.

We do the following numerical experiments: for given values of input size $n$ , output dimension $m$ and accuracy, we generate $200$ classical-quantum channels randomly, and record the numbers of iterations then calculate the average number of iterations and find the maximum number of iterations in these $200$ experiments. The results are shown in Table \ref{table11}. Note that the accuracy $10^{-k}$ in Table \ref{table11} means we run the BA algorithm until $[\max_{x}\{D(\rho_x||\rho^t)\}-\chi(t)]$ is less than $10^{-k}$, and the error between the true capacity and the computed value is of order $10^{-(k+1)}$.
\renewcommand\arraystretch{1.3}
\begin{table}[h]
	\centering
	\caption{}
	\label{table11}
	\begin{tabular}{|c |c|c|c|c|}
		\hline
 Input  $n$ & Output $m$ & Accuracy &Ave iteration &Max ite ($\times10^{2}$)\\
 \hline
 \multirow{3}*{2}&  \multirow{3}*{2} & $10^{-3}$&7 &0.35  \\
 &  & $10^{-4}$& 25&0.12  \\
 &  & $10^{-5}$&64 & 0.25 \\
 \hline
  \multirow{3}*{2}&  \multirow{3}*{5} & $10^{-3}$& 11& 0.27 \\
 &  & $10^{-4}$&24 & 0.11 \\
 &  & $10^{-5}$&44 & 0.13 \\
 \hline
  \multirow{3}*{2}&  \multirow{3}*{8} & $10^{-3}$&11 & 0.25 \\
 &  & $10^{-4}$&24 & 0.70 \\
 &  & $10^{-5}$&39 & 0.11\\
 \hline
  \multirow{3}*{5}&  \multirow{3}*{2} & $10^{-3}$& 125& 5.0 \\
 &  & $10^{-4}$& 245&  4.5\\
 &  & $10^{-5}$&494 &  23\\
 \hline
  \multirow{3}*{5}&  \multirow{3}*{5} & $10^{-3}$&95 &  4.5\\
 &  & $10^{-4}$&243 &  10\\
 &  & $10^{-5}$&456 & 20 \\
 \hline \multirow{3}*{5}&  \multirow{3}*{8} & $10^{-3}$& 85& 0.17 \\
 &  & $10^{-4}$&212 & 0.60 \\
 &  & $10^{-5}$&414 &20  \\
 \hline
  \multirow{3}*{8}&  \multirow{3}*{2} & $10^{-3}$& 102& 4.1 \\
 &  & $10^{-4}$& 219& 12\\
 &  & $10^{-5}$&353 &  23\\
 \hline
  \multirow{3}*{8}&  \multirow{3}*{5} & $10^{-3}$&114 &  4.4\\
 &  & $10^{-4}$&275 &  13\\
 &  & $10^{-5}$&456 & 27 \\
 \hline \multirow{3}*{8}&  \multirow{3}*{8} & $10^{-3}$& 112& 3.0 \\
 &  & $10^{-4}$&291 & 9.8 \\
 &  & $10^{-5}$&582 &25  \\
 \hline
	\end{tabular}
\end{table}

We can see from the table that the iteration complexity scales good as accuracy increases. Notice that the datas of $n=5$ and $n=8$ are vary similar which means the iteration complexity also scales very good as the input size $n$ increases. We can also see for given input size $n$ and accuracy, the output dimension has vary little influence on iteration complexity, which means the iteration complexity also scales good as the output dimension $m$ increases. Compared with our theoretical analysis of iteration complexity in Theorem \ref{th2}: to reach $\epsilon$ accuracy, we need $\frac{\log n}{\epsilon}$ number of iterations.  numerical experiments show that both the average and maximum number of iterations are far smaller than $\frac{log n}{\epsilon}$ to reach $\epsilon$ accuracy, no matter whether  the output quantum states are linearly independent (cases in $(n,m)=(5,2), (8,2)$) or not . The reason of this phenomenon is that the inequalities we used in the proof of Theorem \ref{th2} are quite loose. So Theorem \ref{th2} only provide a very loose upper bound on iteration complexity. In \cite{Sutter}, only the binary two dimensional case is put into numerical experiments and the iteration complexity is of order $1/\varepsilon$  to reach $\varepsilon$ accuracy. So compared with the results in \cite{Sutter}, our algorithm is much better both theoretically and practically (Please refer to \cite{myBA} for the theoretical analysis).

\section{An approximate solution of $p$ in binary two dimensional case}\label{secapp}

In this section we provide an approximate optimal input distribution for the case of binary inputs, two dimensional outputs channel:
\begin{eqnarray}
	\{p_1:\rho_1;p_2:\rho_2\},\quad p_1+p_2=1,\rho_1,\rho_2\in \mathcal{D}^2.
\end{eqnarray}

\subsection{Use Bloch sphere to get an approximate solution}
Any two-dimensional density matrix can be represented as a point in the Bloch sphere \cite{wilde}, as shown in the following:

{\centerline{\includegraphics[width=8cm]{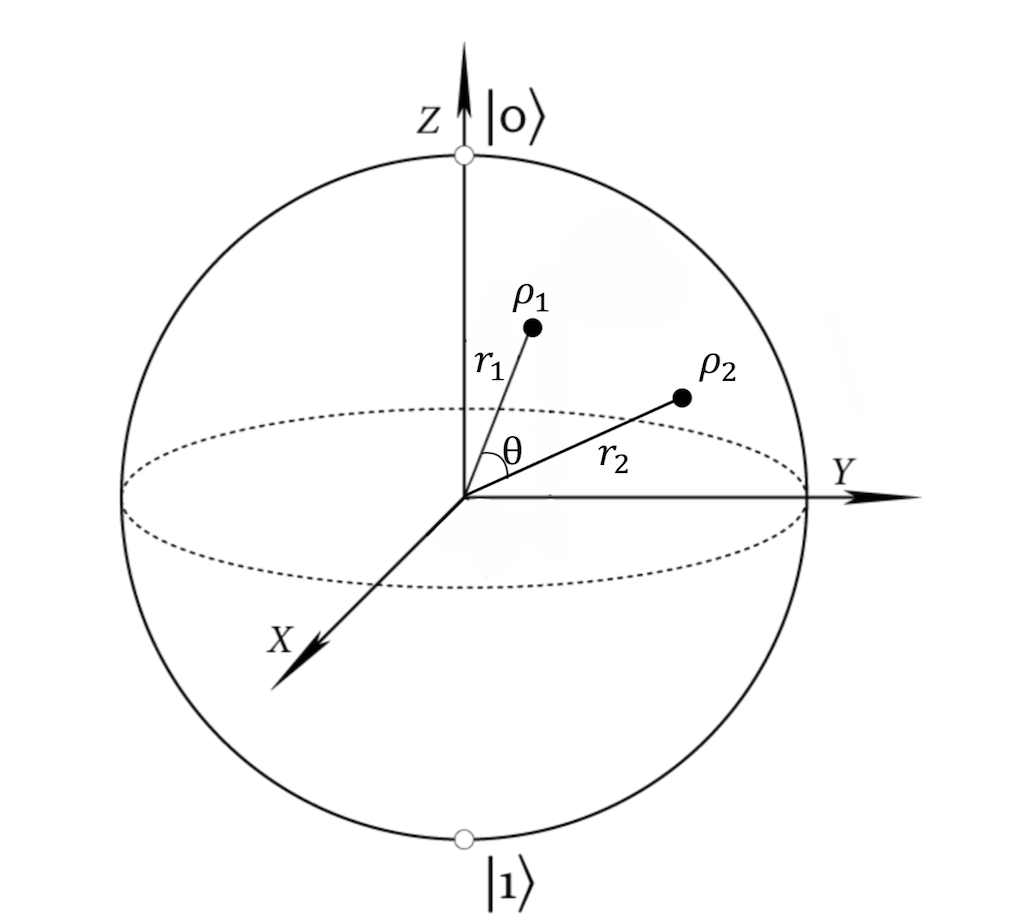}}}
Any density matrix can be represented as a vector in the Bloch sphere starting from the origin. Suppose $\rho_1,\rho_2$ can be represented as $\bm{r_1},\bm{r_2}$ respectively, then the two eigenvalues would be $0.5\pm r_1/2$ and $0.5\pm r_2/2$ respectively. Extend $\bm{r_1}$ we get two intersections on the surface of the Bloch sphere, then these two intersections represents the two eigenvectors of $\rho_1$ (the points on the surface of the sphere represent pure state and the interior points represent mixed states.) A probabilistic combination of $\rho_1,\rho_2$ can be represented as $p_1\rho_1+p_2\rho_2=p_1\bm{r_1}+p_2\bm{r_2}$ (\cite{wilde} Exercise 4.4.13). And any point on the surface of Bloch sphere can be represented as
\begin{eqnarray}
	\cos\frac{\alpha}{2}|0\rangle+\sin\frac{\alpha}{2}e^{i\phi}|1\rangle,
\end{eqnarray} 
where $\alpha$ is the angle to the $Z$ axis and $\phi$ is the angle of the $X$ axis to the projection  of the point on the $X-Y$ plane.

By symmetry, it is obvious that the Holevo quantity is only related to $r_1,r_2,\theta,p_1$, where $\theta$ is the angle between $\bm{r_1},\bm{r_2}$. One interesting result is that the angle $\theta$ has very little influence on $p^*$, where $p^*$ is the optimal distribution that maximizes Holevo quantity. If we know $\lambda_1,\lambda_2$ (the bigger eigenvalues of $\rho_1,\rho_2$ respectively), $\theta$ and $p_1$, then the Holevo quantity can be written as
\begin{eqnarray}\label{chi}
	\chi(\lambda_1,\lambda_2,\theta,p_1)=&S(\frac{1}{2}+||p_1\bm{r_1}+(1-p_1)\bm{r_2}||_2)\\
	&-[p_1S(\frac{1}{2}+r_1)+(1-p_1)S(\frac{1}{2}+r_2)],
\end{eqnarray}
where $S(\cdot)$ is the binary entropy ($S(x)=-(x\log x+(1-x)\log (1-x))$ and $r_i=\lambda_i-\frac{1}{2},i=1,2$.

Using Cosine theorem to calculate $||p_1\bm{r_1}+(1-p_1)\bm{r_2}||_2)$, the gradient of $\chi$ w.r.t. $p_1$ can be calculated directly, denoted as 
\begin{eqnarray}
\nabla_{p_1}\chi(\lambda_1,\lambda_2,\theta,p_1).
\end{eqnarray}

If we can find a $\hat{p}_1$ such that $\nabla_{p_1}\chi(\lambda_1,\lambda_2,\theta,p_1)|_{p_1=\hat{p}_1}=0$, then this $\hat{p}_1$ is the optimal solution (because $\chi(\lambda_1,\lambda_2,\theta,p_1)$ is concave in $p_1$). However, we cannot solve the equation $\nabla_{p_1}\chi(\lambda_1,\lambda_2,\theta,p_1)=0$ w.r.t. $p_1$ when $\theta\ne 0$. Now that $\theta$ has little influence on the optimal distribution $p^*$,  let $\theta=0$ (this is actually the classical case), and let
\begin{eqnarray}
	\nabla_{p_1}\chi(\lambda_1,\lambda_2,\theta=0,p_1)=0,
\end{eqnarray}
the above equation is easy to solve and we  get a solution $\hat{p}_1$:

\begin{eqnarray}\label{app}
	\hat{p}_1=\frac{\frac{1}{2}\frac{1-c}{1+c}-r_2}{r_1-r_2},\ \ {where}\ c=2^{\frac{S(\lambda_1)-S(\lambda_2)}{r_1-r_2}},
\end{eqnarray}
where we assume $r_1\ne r_2$. (It can be easily seen from the Bloch sphere that if $r_1=r_2$, the optimal distribution would be $\{\frac{1}{2},\frac{1}{2}\}$.)

 This $\hat{p}_1$ can be used as an approximate optimal solution. Next we need numerical experiments to see how accurate $\hat{p}_1$ is.

\subsection{Numerical experiments on the approximated solution $\hat{p}_1$}

It is obvious that the maximum of Holevo quantity only depends on $r_1,r_2$ and $\theta$, so without loss of generality, we let $\rho_1$ be on the $Z$ axis and $\rho_2$ be on the $X-Z$ plane:
\begin{eqnarray}
	&\rho_1=\lambda_1|0\rangle\langle0|+(1-\lambda_1)|1\rangle\langle1|;\\
	&\rho_2 =\lambda_2|\psi_0\rangle\langle\psi_0|+(1-\lambda_2)|\psi_1\rangle\langle\psi_1|,
\end{eqnarray}
where
\begin{eqnarray}
	&|\psi_0\rangle=\cos\frac{\theta}{2}|0\rangle+\sin\frac{\theta}{2}|1\rangle;\\
	&|\psi_1\rangle=-\sin\frac{\theta}{2}|0\rangle+\cos\frac{\theta}{2}|1\rangle,
\end{eqnarray}
which means the angle between $\rho_1$ and $\rho_2$ (i.e. $\bm{r_1}$ and $\bm{r_2}$) is $\theta$. 

In the numerical experiments, we let $\lambda_1,\lambda_2$ range from $0.5$ to $1$, and $\theta$ ranges from $0$ to $\pi$.  For each value of $(\lambda_1,\lambda_2,\theta)$, we substitute $(\lambda_1,\lambda_2)$ into (\ref{app}) to compute $\hat{p}_1$.  Then substitute $(\lambda_1,\lambda_2,\theta,\hat{p}_1)$ into (\ref{chi}) to get the approximate maximum of Holevo quantity over $p_1$: $\chi(\lambda_1,\lambda_2,\theta,\hat{p}_1)$. To see how accurate this approximate maximum is, we need BA algorithm to provide an accurate maximum. The termination criterion for the iteration process of BA algorithm is, stopping when $[\max_{x}\{D(\rho_x||\rho^t)\}-f(p^t,p^t)]$ is less than $10^{-6}$, then the BA algorithm outputs a value of Holevo quantity $\chi_{BA}(\lambda_1,\lambda_2,\theta)$. And we can compute the error of $\chi(\lambda_1,\lambda_2,\theta,\hat{p}_1)$ then take the maximum over $\theta\in[0,\pi]$
\begin{eqnarray}
	{Error}(\lambda_1,\lambda_2)=\max_{\theta\in[0,\pi]}|\chi(\lambda_1,\lambda_2,\theta,\hat{p}_1)-\chi_{BA}(\lambda_1,\lambda_2,\theta)|.
\end{eqnarray}

 FIG. \ref{f1} is the numerical result, which is a plot of $(\lambda_1,\lambda_2,{Error}(\lambda_1,\lambda_2))$.





\begin{figure}[h]

\centerline{\includegraphics[width=9cm]{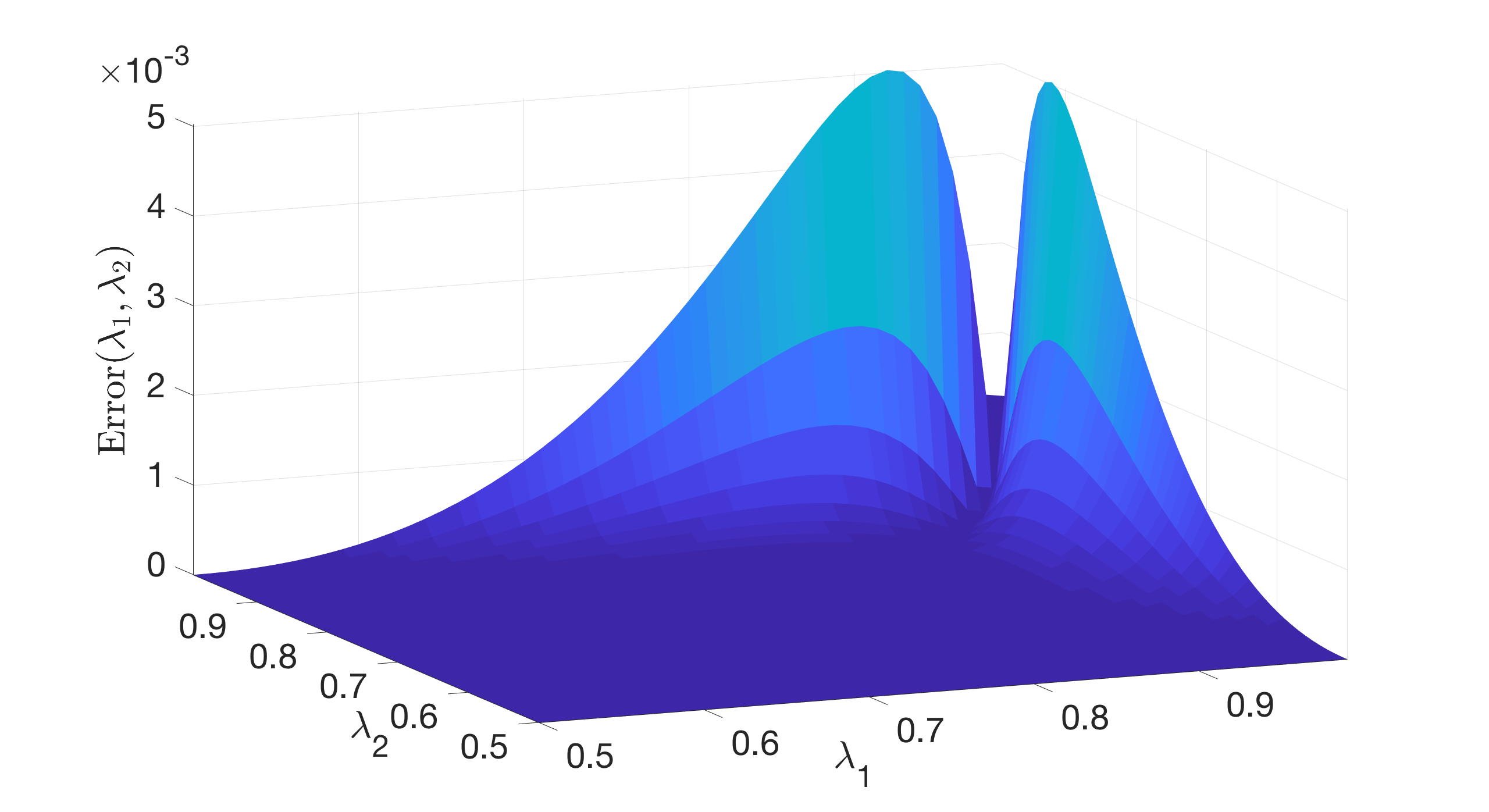}}
\caption{}	
\label{f1}
\end{figure}
From  Figure \ref{f1} we can see that if $\lambda_1,\lambda_2$ are not ``too big", the error can be upper bounded by $10^{-3}$. To see this more directly, we take the maximum of ${Error}(\lambda_1,\lambda_2)$ for different ranges of $\lambda_1,\lambda_2$: 
\begin{eqnarray}
	\max_{\lambda_1,\lambda_2\in[0.5,R]}{Error}(\lambda_1,\lambda_2).
\end{eqnarray}
Figure. \ref{f2} is a plot of $(R,\max_{\lambda_1,\lambda_2\in[0.5,R]}{Error}(\lambda_1,\lambda_2))$.
\begin{figure}[h]
\centerline{\includegraphics[width=9cm]{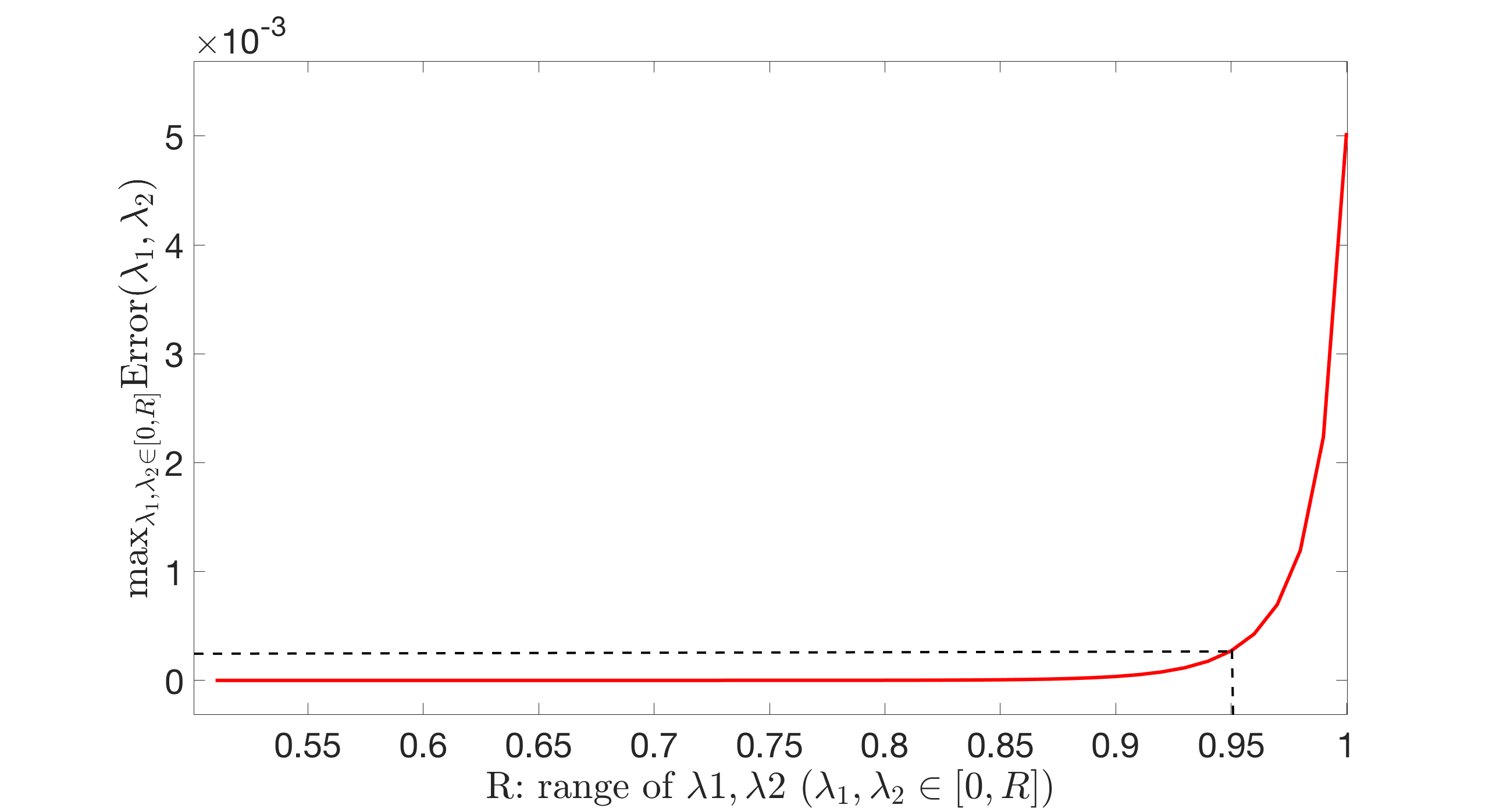}}
\caption{}	
\label{f2}
\end{figure}
From Figure. \ref{f2} we can see that if $\lambda_1,\lambda_2<0.95$, the error of approximate maximum of Holevo quantity can be upper bounded by $3\times10^{-4}$. So we can conclude that when the bigger eigenvalues of $\rho_1,\rho_2$ are not too big (no bigger than $0.95$), (\ref{app}) can make the error of the maximum of Holevo quantity smaller than $3\times10^{-4}$.

The approximate solution is an interesting phenomenon. The reason why the angle $\theta$ has such little influence on the maximum of Holevo quantity is unclear.

\section{Conclusion}
In this paper, we used numerical experiments to show that the BA algorithm works much faster than the theoretical analysis. We also presented an explicit approximate solution for the binary inputs, two dimensional outputs channel, which has an error within $3\times 10^{-4}$ when the eigenvalues of the two channel matrices are not bigger than $0.95$. However, the theoretical reason behind this approximation is not clear.
\vspace{2em}

$\bf{References}$

\vspace{1em}

\bibliography{Paper}	
\bibliographystyle{unsrt}

\end{document}